\documentclass[conference]{IEEEtran}
\usepackage{amsmath,amssymb,amsthm,bm}
\usepackage{booktabs,graphicx}
\usepackage{tikz}
\usetikzlibrary{arrows.meta}
\usepackage{algorithm,algpseudocode}

\makeatletter
\renewcommand{\fst@algorithm}{%
  \fs@ruled
  \def\@fs@pre{%
    \kern0.10in
    \hrule height.8pt depth0pt
    \kern2pt
  }%
}
\makeatother
\usepackage{cite}
\usepackage[hidelinks]{hyperref}
\hypersetup{pdftitle={Low-Latency Belief-Propagation Decoding of Quantum LDPC Codes with Reinforcement-Learned Scheduling},
 pdfauthor={Mohsen Moradi, Ghilas Aissou, Taejoon Kim, Remi Chou, David Mitchell}}
\newtheorem{proposition}{Proposition}

\newcommand{\cB}{\mathcal{B}}
\newcommand{\cR}{\mathcal{R}}
\newcommand{\cV}{\mathcal{V}}
\newcommand{\cN}{\mathcal{N}}
\newcommand{\cC}{\mathcal{C}}
\newcommand{\cA}{\mathcal{A}}
\newcommand{\BPS}{\mathcal{U}}
\DeclareMathOperator*{\argmax}{arg\,max}

\begin{document}
\title{\huge Low-Latency Belief-Propagation Decoding of Quantum LDPC Codes with Reinforcement-Learned Scheduling\vspace{-6mm}}

\author{
Mohsen~Moradi$^*$,
Ghilas~Aissou$^*$,
Taejoon~Kim$^*$,
R\'emi~A.~Chou$^\dag$,
and David~G.~M.~Mitchell$^\ddagger$\\
$^*$School of Electrical, Computer and Energy Engineering, Arizona State University, Tempe, AZ 85287, USA\\
(Corresponding author e-mail: mmorad11@asu.edu),\\
$^\dag$Department of Computer Science and Engineering, University of Texas at Arlington, Arlington, TX 76019, USA\\ 
$^\ddagger$Klipsch School of Electrical and Computer Engineering, New Mexico State University, Las Cruces, NM, USA\vspace{-3mm}
}

\maketitle

\begin{abstract}
Belief propagation (BP) is attractive for quantum low-density parity-check (QLDPC) decoding because of its low complexity and low latency, but short cycles and degeneracy can limit its error-correction performance. Reinforcement-learned sequential scheduling (RL-S) substantially improves this performance by adapting the variable node (VN) update order, although Q-table storage and heap-based selection introduce additional overhead. We propose bucket RL-S (B-RL-S), which quantizes learned Q-values into integer priorities and schedules VN updates using dynamically maintained first-in, first-out buckets. Without retraining or changing the BP message updates, B-RL-S reduces policy storage and refreshes only priorities affected by local residual syndrome changes. 
A fixed number of priority levels enables constant-time VN selection and queue relocation. 
For the $[[288,12,18]]$ bivariate bicycle code over the
depolarizing channel at $p=0.05$, B-RL-S with eight
priority levels achieves approximately one fifth the
block error rate (BLER) of flooding BP followed by
order-10 ordered-statistics decoding. The respective
caps are $1000$ sequential iterations for B-RL-S and
$1000$ flooding iterations for the BP stage of the
reference decoder.
These results highlight strong error correction with inexpensive scheduling and no additional post-processing stage.
\end{abstract}

\begin{IEEEkeywords}
Quantum LDPC codes, belief propagation, reinforcement learning, scheduling, quantization.
\end{IEEEkeywords}

\section{Introduction}
Quantum low-density parity-check (QLDPC) codes offer
a promising route toward low-overhead quantum error
correction by combining sparse stabilizer checks
with the ability to protect multiple logical
qubits~\cite{bravyi2024}. Their sparse Tanner graphs
make belief propagation (BP) an attractive decoding
approach, but short cycles and degeneracy can lead
to oscillations and nonconvergence~\cite{poulin2008,
kuo2022}. A key objective is to improve BP's
error-correction performance while limiting
computational work and decoding latency. Several approaches improve BP decoding.
Ordered-statistics decoding (OSD) supplements BP
with algebraic reprocessing, while localized
statistics decoding (LSD) reduces this overhead
through localized, parallelizable
inversion~\cite{panteleev2021,roffe2020,hillmann2025}.
Both retain a separate algebraic decoding stage.
Layered and sequential schedules provide another
approach~\cite{kuo2020,ducrest2023,moradi2026seq}.


Reinforcement learning (RL) has been used to learn
message update schedules for classical
error-correcting codes~\cite{habib2021,reldec2023,
moradi2025enhancing}. Our previous work developed
reinforcement-learned sequential scheduling
(RL-S), a variable node (VN) scheduler driven by
local residual syndrome patterns~\cite{moradi2026rl}.
Its trained Q-table ranks eligible VN updates,
while incremental state maintenance and a binary
max-heap avoid repeated global searches. These
local states depend on hard-decision residuals
and change only when the corresponding residual
checks change. Full precision policy storage and
heap maintenance nevertheless introduce overhead.

We propose bucket RL-S (B-RL-S), which converts
a fixed RL-S Q-table into $K$ globally comparable
integer priority levels and schedules eligible
VNs using dynamically maintained first-in,
first-out (FIFO) buckets. B-RL-S inherits the
RL-S state representation and coupled BP updates
without retraining. It retains all active VNs
not yet updated in the current iteration and
refreshes only priorities affected by local state
changes. We specify the inference rule, analyze
scheduling work and policy storage, and bound
the instantaneous Q-value gap relative to
greedy selection. Finite-precision BP and
quantization-based decoder diversity have also
been investigated for QLDPC
codes~\cite{garciaherrero2026diversity}; here,
quantization is applied to the scheduling policy,
while the BP message representation is retained.

For the $[[288,12,18]]$ bivariate bicycle (BB)
code with depolarizing probability $p=0.05$, B-RL-S with $K=8$ achieves
approximately one fifth the block error rate
(BLER) of BP followed by order-10 OSD
(BP-OSD-10). The respective caps are $1000$
sequential iterations for B-RL-S and $1000$
flooding iterations for the BP stage of
BP-OSD-10. B-RL-S also closely tracks RL-S with
$K=4$ on the $[[144,12,12]]$ BB code and with
$K=8$ on the A5 code. For fixed $K$, the bucket
scheduler permits constant-time VN selection
and queue relocation.

\section{Background}

A Calderbank--Shor--Steane (CSS) QLDPC code~\cite{calderbank1996good,steane1996multiple}
is specified by sparse binary matrices $H_X$ and $H_Z$ satisfying
$H_X H_Z^{\mathsf{T}}=0$ over $\mathbb{F}_2$.
Under independent depolarizing noise, the Pauli error
$E_i\in\{I,X,Y,Z\}$ on qubit $i$ has the prior distribution
\begin{equation}
    \pi(q)=\Pr(E_i=q)=
    \begin{cases}
        1-p, & q=I,\\
        p/3, & q\in\{X,Y,Z\}.
    \end{cases}
\end{equation}
Define the binary error components by
\[
    e_i^X=\mathbf{1}\{E_i\in\{X,Y\}\},
    \qquad
    e_i^Z=\mathbf{1}\{E_i\in\{Z,Y\}\}.
\]
With perfect syndrome measurements, the measured syndromes are
\begin{equation}
    \bm{s}^X=H_X\bm{e}^Z,
    \qquad
    \bm{s}^Z=H_Z\bm{e}^X,
    \label{eq:syndromes}
\end{equation}
where all binary vector and matrix operations are over $\mathbb{F}_2$.
For current estimates $\hat{\bm{e}}^X$ and $\hat{\bm{e}}^Z$,
define the residual syndromes and their total Hamming weight as
\begin{equation}
    \begin{aligned}
        \bm{\delta}^X &= \bm{s}^X + H_X\hat{\bm{e}}^Z,\\
        \bm{\delta}^Z &= \bm{s}^Z + H_Z\hat{\bm{e}}^X,\\
        w &= \|\bm{\delta}^X\|_1+\|\bm{\delta}^Z\|_1.
    \end{aligned}
    \label{eq:residual}
\end{equation}
Here, $w$ counts the unsatisfied checks, and $w=0$ indicates
syndrome consistency. Quantum correction succeeds if and only if
\[
    \begin{aligned}
        \bm{e}^X+\hat{\bm{e}}^X
            &\in\operatorname{row}(H_X),\\
        \bm{e}^Z+\hat{\bm{e}}^Z
            &\in\operatorname{row}(H_Z).
    \end{aligned}
\]
A syndrome consistent correction that violates either condition
results in a logical error. BLER includes both logical errors and nonconvergence.

\subsection{Sequential Updates and Learned Priorities}
For $\alpha\in\{X,Z\}$, let $v_i$ denote the VN corresponding
to physical qubit $i$ and $c$ a check node (CN) in the
Tanner graph associated with $H_\alpha$.
The sets $\cN_\alpha(v_i)$ and $\cN_\alpha(c)$ contain
the CNs adjacent to $v_i$ and the VNs adjacent to $c$,
respectively.
The set $\cV$ consists of active VNs, each having at least one
incident edge in either graph. Within each iteration, every active
VN is updated at most once, with early termination when $w=0$.

We retain the coupled scalar message BP update of
RL-S~\cite{moradi2026rl}, based on quaternary BP~\cite{kuo2020}.
A complete VN update, $\BPS_i$, updates the incoming messages
at $v_i$ in both graphs, computes the outgoing messages using
the joint Pauli prior, and updates the quaternary hard decision
from the resulting beliefs. The joint prior accounts for the
dependence between the binary $X$- and $Z$-error components.
The message update equations, clipping rules, and cached
check product operations remain unchanged; B-RL-S modifies
only the VN selection rule.

The maximum VN degree across the two graphs is
$A=\max_{i,\alpha}|\cN_\alpha(v_i)|$.
For each VN, the neighboring residual-syndrome bits in each
graph follow the same fixed indexing used during training
and are padded with zeros to length $A$. The resulting bits
$b_{i,j}^{\alpha}$, with $j=0,\ldots,A-1$, determine the local
state index
\begin{equation}
    \sigma_i
    = \sum_{j=0}^{A-1} 2^j b_{i,j}^{X}
    + 2^A \sum_{j=0}^{A-1} 2^j b_{i,j}^{Z}.
    \label{eq:state}
\end{equation}
The trained Q-table stores a score $Q(\sigma,i)$ for each
of the $S=2^{2A}$ state indices and $n$ VNs, giving $nS$
entries. At each step, RL-S selects a VN with the largest
current score among the eligible candidates as
\begin{equation}
    i^\star \in \argmax_{i\in\cR} Q(\sigma_i,i),
    \label{eq:rls}
\end{equation}
where $\cR\subseteq\cV$ is the set of VNs not yet updated
in the current iteration. These scores serve as learned
scheduling priorities based on local residual-syndrome
patterns. The Q-table remains fixed throughout
decoding.

\section{Bucket RL-S}
An $[[n,k,d]]$ code encodes $k$ logical qubits into $n$ physical qubits and has minimum distance $d$.
B-RL-S converts (offline) the trained Q-table into $K$ integer priority
levels, where $K\geq2$. The values $q_{\min}$ and $q_{\max}$
are the minimum and maximum entries over the entire table.
For $q_{\max}>q_{\min}$, the quantization step and priority
table are
\begin{equation}
    \begin{aligned}
        \Delta
        &= \frac{q_{\max}-q_{\min}}{K},\\
        P(\sigma,i)
        &= \min\!\left\{
            K-1,\,
            \left\lfloor
                \frac{Q(\sigma,i)-q_{\min}}{\Delta}
            \right\rfloor
        \right\}.
    \end{aligned}
    \label{eq:quantize}
\end{equation}
Each entry of $P$ belongs to $\{0,\ldots,K-1\}$, with the
upper endpoint $q_{\max}$ assigned to level $K-1$.
For a constant Q-table, all priorities are zero.
The same assignment defines the $K=1$ case, which provides
a fixed-order scheduling control.

The conversion is performed offline for each trained table
and choice of $K$, without retraining. Only $P$ is retained
as the scheduling policy during decoding; neither Q-value
updates nor quantization arithmetic are required online. A common quantizer is applied across all VNs. A larger
Q-value can never receive a lower integer priority, although
different Q-values may map to the same level. 
The quantization range is determined once from the minimum
and maximum values in the entire trained Q-table and remains
fixed for all frames.
Consequently, $P$ depends only on the trained Q-table and $K$.

\subsection{Dynamic Buckets and Tie Handling}

The scheduler maintains $K$ first-in, first-out (FIFO)
queues, $\cB_0,\ldots,\cB_{K-1}$. Bucket $\cB_k$ contains the active VNs with priority $k$
that have not yet been updated in the current iteration.
At the start of each iteration, $\cR=\cV$, and the buckets
are initialized by inserting active VNs in increasing
index order. Each VN $i$ is appended to the tail of
$\cB_{P(\sigma_i,i)}$.

The next VN is selected from the head of the highest-priority
nonempty bucket, where
\begin{equation}
    \begin{aligned}
        k^\star
        &= \max\{k:\cB_k\neq\varnothing\},\\
        i
        &= \operatorname{head}(\cB_{k^\star}).
    \end{aligned}
    \label{eq:select}
\end{equation}
The selected VN is removed from its bucket and from $\cR$
before the BP update $\BPS_i$ is applied. It remains
ineligible until the next iteration.
All active VNs not yet updated in the current iteration remain
eligible for selection, including those with zero local state or
zero priority.

After a local state change, an eligible VN is moved to
the tail of its destination bucket only if its integer
priority changes. Otherwise, it retains its current
queue position. Thus, ties are resolved by FIFO order
within each priority level. The buckets determine a
serial update schedule; VNs in the same bucket are not
updated simultaneously.

\subsection{Local Maintenance After a Hard-Decision Change}

A change in the $Z$ component
flips the residual checks adjacent to $v_i$ in $H_X$,
whereas a change in the $X$ component flips those in $H_Z$.
The affected check sets are therefore
\[
    \begin{aligned}
        \cC_X(i)
        &=
        \begin{cases}
            \cN_X(v_i), & f_i^Z=1,\\
            \varnothing, & f_i^Z=0,
        \end{cases}\\
        \cC_Z(i)
        &=
        \begin{cases}
            \cN_Z(v_i), & f_i^X=1,\\
            \varnothing, & f_i^X=0.
        \end{cases}
    \end{aligned}
\]
The indicators $f_i^X$ and $f_i^Z$ equal one when the
corresponding hard-decision components change during
$\BPS_i$, and zero otherwise. 
Only VNs in
\begin{equation}
    \cA_i
    =
    \bigcup_{\alpha\in\{X,Z\}}
    \bigcup_{c\in\cC_\alpha(i)}
    \cN_\alpha(c)
    \label{eq:affected}
\end{equation}
can have a changed local state. As in RL-S~\cite{moradi2026rl}, the residual syndromes,
their total Hamming weight $w$, and the affected local
states are maintained incrementally. Each flipped
residual check toggles its corresponding bit in the state
of every incident VN, including VNs already updated in
the current iteration. When several flipped checks affect
the same VN, all corresponding state bits are updated.

After the changes in both graphs have been applied,
the priority of each distinct VN in $\cA_i\cap\cR$ is
refreshed once using its updated state. These VNs are
processed in first encounter order under a fixed traversal:
the $X$ graph is processed before the $Z$ graph, and the neighbors of each node are processed in the same predetermined order.
This determines the insertion order when several VNs
move into the same bucket. If neither hard-decision component changes, the
residual syndrome states remain unchanged and no priority
refresh is required, even though BP messages may have
changed. Algorithm~\ref{alg:bucket} summarizes the decoder.

\begin{algorithm}[t]
\caption{B-RL-S decoding with a fixed priority table}
\label{alg:bucket}
\begin{algorithmic}[1]
\small
\Require $H_X,H_Z$, syndromes $\bm{s}^X,\bm{s}^Z$, channel
         parameter $p$, priority table $P$, number of levels
         $K$, and maximum number of iterations $T$
\Ensure Pauli estimate and a Boolean indicating syndrome consistency

\State Initialize the RL-S BP messages and hard decisions
\State Compute $\bm{\delta}^X,\bm{\delta}^Z,w,$ and all $\sigma_i$

\For{$t=1,\ldots,T$}
    \State $\cR\gets\cV$; empty all buckets
    \For{$i\in\cV$ in increasing index order}
        \State Append $i$ to $\cB_{P(\sigma_i,i)}$
    \EndFor

    \While{$\cR\neq\varnothing$}
        \If{$w=0$}
            \State \Return current Pauli estimate and $\mathrm{true}$
        \EndIf

        \State Select $i$ using \eqref{eq:select}
        \State Remove $i$ from its bucket and from $\cR$
        \State Apply $\BPS_i$ and identify $f_i^X,f_i^Z$
        \State Form $\cC_X(i),\cC_Z(i)$ and $\cA_i$
        \State Update the residuals, $w$, and all states in $\cA_i$

        \For{each distinct $u\in\cA_i\cap\cR$ in first-encounter order}
            \State $k\gets P(\sigma_u,u)$
            \If{$u$ is not currently in $\cB_k$}
                \State Remove $u$ from its current bucket
                \State Append $u$ to $\cB_k$
            \EndIf
        \EndFor

        \If{$w=0$}
            \State \Return current Pauli estimate and $\mathrm{true}$
        \EndIf
    \EndWhile
\EndFor

\State \Return current Pauli estimate and $(w=0)$
\end{algorithmic}
\end{algorithm}

\subsection{Example and Score Resolution}

At the start of an iteration, consider four eligible VNs with
Q-scores $q_1=0.81$, $q_2=0.94$, $q_3=0.60$, and
$q_4=0.18$, and a global quantization range $[0,1]$. With $K=4$, the bin width is $\Delta=0.25$.
Looking up their current local states in
the stored integer priority table gives
\[
  \bigl(P(\sigma_1,1),P(\sigma_2,2),
        P(\sigma_3,3),P(\sigma_4,4)\bigr)=(3,3,2,0).
\]
Insertion in increasing VN index order gives
$\cB_3=(1,2)$, $\cB_2=(3)$, $\cB_1=\varnothing$, and
$\cB_0=(4)$, as shown in Fig.~\ref{fig:bucket-example}(a).
The scheduler selects VN~1 from the head of $\cB_3$
and removes it from eligibility before applying its BP
update.

Suppose that the updated hard decision at VN~1 flips
residual checks that change the local states of VNs~2
and~3, while VN~4 is unaffected and the residual syndrome
remains nonzero. After all state changes in both graphs
have been applied, let the updated table lookups give
\[
  P(\sigma_2',2)=3,\qquad P(\sigma_3',3)=3.
\]
The table itself remains fixed; the updated states select
its entries. VN~2 retains its position because its
priority is unchanged. VN~3 moves from $\cB_2$ to the
tail of $\cB_3$, giving $\cB_3=(2,3)$, as shown in
Fig.~\ref{fig:bucket-example}(b). Thus, VN~2 is selected
next. VN~1 remains ineligible until the next iteration,
and VN~4 remains eligible in $\cB_0$ despite its zero
priority.

If the same trained Q-table is instead quantized offline
with $K=2$, the initial priorities become $(1,1,1,0)$.
The first three VNs then occupy the same bucket and are
ordered by FIFO. The Q-score bound below analyzes the
original trained table; decoder inference accesses only
$P$.

\begin{figure}[t]
\centering
\begin{tikzpicture}[x=1cm,y=1cm,
  every node/.style={font=\footnotesize},
  token/.style={draw=black!65,line width=0.35pt,
    rounded corners=1pt,minimum width=0.86cm,
    minimum height=0.36cm,inner sep=1pt},
  selected/.style={token,line width=0.95pt,fill=black!9},
  moved/.style={token,dashed,fill=black!4},
  link/.style={-{Stealth[length=1.2mm]},line width=0.45pt}]
  \foreach \x in {0,4.20}{
    \begin{scope}[shift={(\x,0)}]
      \node[font=\scriptsize] at (2.17,2.34)
        {head $\longrightarrow$ tail};
      \foreach \k in {0,...,3}{
        \pgfmathsetmacro{\row}{0.25+0.55*\k}
        \node at (0.26,\row) {$\cB_{\k}$};
        \draw[black!30,rounded corners=1pt]
          (0.67,\row-0.235) rectangle (3.88,\row+0.235);
      }
      \node[token] at (1.35,0.25) {$v_4$};
      \node[font=\scriptsize,text=black!55] at (2.17,0.80)
        {empty};
    \end{scope}
  }
  \node at (2.02,2.72) {(a) Before selection};
  \node at (6.22,2.72) {(b) After updating $v_1$};

  \node[selected] (a1) at (1.35,1.90) {$v_1$};
  \node[token] (a2) at (2.94,1.90) {$v_2$};
  \draw[link] (a1)--(a2);
  \node[token] at (1.35,1.35) {$v_3$};

  \node[selected] (b2) at (5.55,1.90) {$v_2$};
  \node[moved] (b3) at (7.14,1.90) {$v_3$};
  \draw[link] (b2)--(b3);
  \node[font=\scriptsize,text=black!55] at (6.37,1.35)
    {empty};

  \node at (2.02,-0.20) {Select $v_1$};
  \node at (6.22,-0.20) {Next: $v_2$};
  \draw[black!25] (0.04,-0.47)--(8.08,-0.47);
  \node at (4.06,-0.71)
    {$v_2:\ 3\!\to\!3$, keep position;
     $\qquad v_3:\ 2\!\to\!3$, append.};
\end{tikzpicture}
\caption{Integer priority FIFO scheduling with $K=4$.
Tokens identify VNs; bucket labels indicate their current
integer priorities. Thick outlines mark the next VN selected,
and the dashed outline marks VN~3 after relocation.
VN~1 is removed before its BP update and remains
ineligible for the rest of the iteration.}
\label{fig:bucket-example}
\end{figure}
The following proposition bounds the Q-score difference
between bucket and greedy selection at a fixed decoder state.

\begin{proposition}
\label{prop:gap}
For $K\geq2$ and $q_{\max}>q_{\min}$, consider a fixed
decoder state with $\cR\neq\varnothing$.
If B-RL-S selects VN~$a$ according to
\eqref{eq:quantize} and \eqref{eq:select}, then
\begin{equation}
    0\leq\max_{i\in\cR}Q(\sigma_i,i)-Q(\sigma_a,a)
    \leq\Delta.
    \label{eq:gap}
\end{equation}
\end{proposition}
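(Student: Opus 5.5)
The lower bound is immediate: $a$ is taken from a bucket, and buckets contain only VNs in $\cR$, so $a\in\cR$ and $Q(\sigma_a,a)\le\max_{i\in\cR}Q(\sigma_i,i)$. For the upper bound, I would first pin down the invariant that makes the argument work. At every selection instant, each $u\in\cR$ lies in bucket $\cB_{P(\sigma_u,u)}$ for its \emph{current} state $\sigma_u$, and no VN outside $\cR$ lies in any bucket. This holds after initialization, since every $i\in\cV=\cR$ is appended to $\cB_{P(\sigma_i,i)}$. It is preserved after each update $\BPS_i$ for three reasons. First, the selected VN is removed from its bucket and from $\cR$. Second, only VNs in $\cA_i$ can change state, by the discussion around \eqref{eq:affected}. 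Third, every VN in $\cA_i\cap\cR$ is refreshed against its updated state and relocated whenever its priority changes. Given this invariant, the selection rule \eqref{eq:select} yields $P(\sigma_a,a)=k^\star=\max_{i\in\cR}P(\sigma_i,i)$.

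Next I would use the quantizer \eqref{eq:quantize} to relate priorities to Q-values. Write $\ell(q)=\min\{K-1,\lfloor (q-q_{\min})/\Delta\rfloor\}$. Every table entry satisfies $q\in[q_{\min},q_{\max}]$, and $q_{\max}=q_{\min}+K\Delta$. Hence level $\ell(q)=k$ implies
\[
q_{\min}+k\Delta\le q< q_{\min}+(k+1)\Delta
\]
for $k<K-1$, and $q_{\min}+(K-1)\Delta\le q\le q_{\min}+K\Delta$ for $k=K-1$. In both cases $q\in[q_{\min}+k\Delta,\,q_{\min}+(k+1)\Delta]$. Monotonicity of $\ell$, noted after \eqref{eq:quantize}, is also useful.

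Now let $i^\star$ attain $\max_{i\in\cR}Q(\sigma_i,i)$. By maximality of $k^\star$, $P(\sigma_{i^\star},i^\star)\le k^\star$, so the bracket gives $Q(\sigma_{i^\star},i^\star)\le q_{\min}+(k^\star+1)\Delta$. Since $P(\sigma_a,a)=k^\star$, the bracket also gives $Q(\sigma_a,a)\ge q_{\min}+k^\star\Delta$. Subtracting these yields the gap bound $\Delta$. By monotonicity, $P(\sigma_{i^\star},i^\star)\ge k^\star$ as well, so both VNs actually share a level. This is not needed for the argument but confirms the picture.

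The only delicate step is the invariant. The paper's description leaves two points to check explicitly. One is that VNs whose states change through checks flipped in both graphs are refreshed only once, after all bit toggles, so their bucket matches the final state. The other is that VNs already updated, whose states also toggle, are not reinserted. I would handle both by induction on the number of selections within an iteration, tracking the bucket contents exactly as in Algorithm~\ref{alg:bucket}. The closed top bin at $q_{\max}$ is the only edge case in the quantizer bracket, and it is covered by the $\min\{K-1,\cdot\}$ clause.
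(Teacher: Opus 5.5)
Your proof is correct and follows the paper's argument. Both proofs place the selected VN and a maximum-score eligible VN at the top occupied level $k^\star$, and note that this level's Q-interval $[q_{\min}+k^\star\Delta,\,q_{\min}+(k^\star+1)\Delta]$, closed at the top when $k^\star=K-1$, has width $\Delta$; your explicit check of the bucket invariant (each $u\in\cR$ lies in $\cB_{P(\sigma_u,u)}$ for its current state, and no other VN occupies a bucket) is a step the paper leaves implicit, and it holds by the induction you describe.
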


\begin{proof}
The quantizer is nondecreasing, so a maximum score
eligible VN and the selected VN belong to the same
highest occupied bucket. Its Q-score interval has width
at most $\Delta$, including the upper endpoint of the
final bin, which gives the stated bound.
\end{proof}

For a constant Q-table, the gap is zero. If a unique
maximum exceeds every other eligible score by more than
$\Delta$, both rules select the same VN at that state.
The bound concerns one selection, not accumulated reward
or BLER: different updates can change subsequent BP
messages and local states.

\section{Scheduling Cost and Policy Storage}

Table~\ref{tab:scheduling} compares the online scheduling
operations of heap-based RL-S and B-RL-S. A priority
refresh includes a table lookup and, if the priority
changes, a heap repair or bucket relocation. Although
both table lookups take $O(1)$ time, heap extraction
and priority repair require $O(\log n)$ time. Bottom-up heap
construction costs $O(n)$, whereas clearing the buckets
and inserting the eligible VNs costs $O(n+K)$.

\begin{table}[t]
\centering
\caption{Worst-case online scheduling operation
costs for fixed $K$.}
\label{tab:scheduling}
\footnotesize
\setlength{\tabcolsep}{4pt}
\renewcommand{\arraystretch}{1.12}
\begin{tabular}{@{}lcc@{}}
\toprule
Operation & RL-S & B-RL-S\\
\midrule
Initialize one iteration & $O(n)$ & $O(n+K)$\\
Select and remove a VN & $O(\log n)$ & $O(1)$\\
Refresh one VN priority & $O(\log n)$ & $O(1)$\\
\bottomrule
\end{tabular}
\vspace{-3mm}
\end{table}

Let $T$ be the iteration cap, $I\leq T$ the number of
iterations actually started, $U$ the total number of VN
updates, and $R$ the total number of affected eligible VN
priority refreshes. An affected eligible VN is counted
at most once per VN update, even if several of its state
bits change, and can be counted again after a later
update. These counts can differ between decoders.

Each initialized iteration requires one scheduler build,
each VN update requires one selection and removal,
and each affected eligible VN requires one priority
refresh. Multiplying the operation costs in
Table~\ref{tab:scheduling} by these counts gives
\begin{equation}
\begin{aligned}
C_{\mathrm{heap}}
  &=I\,O(n)+U\,O(\log n)+R\,O(\log n)\\
  &=O\!\left(In+(U+R)\log n\right),\\
C_{\mathrm{bucket}}
  &=I\,O(n+K)+U\,O(1)+R\,O(1)\\
  &=O\!\left(I(n+K)+U+R\right).
\end{aligned}
\label{eq:scheduling-cost}
\end{equation}
Each iteration updates at most $n$ VNs, giving
$U\leq In\leq Tn$. With bounded VN and CN degrees,
one update affects at most a constant number $c$ of
eligible VNs, so $R\leq cU$. Thus, for fixed $K$,
initialization contributes $O(Tn)$ to both schedulers,
while selection and priority refreshes contribute
$O(Tn\log n)$ for the heap and $O(Tn)$ for the buckets.
The worst-case scheduling cost therefore decreases
from $O(Tn\log n)$ to $O(Tn)$.

For comparison, flooding BP updates all edge messages
and hard decisions in each iteration. If $E$ is the
total number of edges in both Tanner graphs, these
operations require $O(E+n)$ work per iteration.
Bounded degrees give $E=O(n)$ and hence $O(Tn)$ work
over at most $T$ iterations. The BP updates and
incremental residual-state maintenance excluded from
\eqref{eq:scheduling-cost} contribute $O(U)$ work under
the same degree bounds. Including them, B-RL-S also
has $O(Tn)$ worst-case decoding work while retaining
an adaptive sequential schedule.

For an idealized latency comparison, assume that fully
parallel flooding BP completes an entire iteration,
including hard decisions and syndrome testing, in one
clock cycle. Flooding then requires at most $T$ cycles.
Under the same degree bounds and fixed $K$, assume
constant-cycle local VN processing and indexed memory
accesses. Random sequential BP and B-RL-S then require
$O(Tn)$ cycles, whereas RL-S with a serial memory-based
heap requires $O(Tn\log n)$ cycles because each heap
traversal takes $O(\log n)$ cycles. 
Cluster-based scheduling~\cite{moradi2026cluster} offers
further reductions in sequential BP update depth, with
conflict-free color clustering~\cite{moradi2026color}
requiring $O(T\chi_{\mathrm{VN}})$ parallel update stages,
where $\chi_{\mathrm{VN}}$ is the number of color classes,
assuming constant-cycle parallel processing within each
class and excluding scheduling overhead.

\section{Numerical Results}
\label{sec:results}

\subsection{Simulation Settings}
We consider $[[144,12,12]]$ and $[[288,12,18]]$
BB codes~\cite{bravyi2024} and the $[[180,10,15\leq d\leq18]]$ A5 code~\cite{panteleev2021}. Errors follow the
independent depolarizing model, and syndrome measurements
are noiseless. For each code, the priority tables for all
values of $K$ are obtained from the same frozen RL-S
Q-table, without retraining. B-RL-S retains the message
updates, initialization, and syndrome-based stopping rule
of RL-S. The component-prior log-likelihood ratio (LLR) is
$\mu=\log[(1-2p/3)/(2p/3)]$, and message magnitudes are
clipped at~25.

We report BLER, counting both nonconvergence within the
prescribed cap and syndrome consistent logical errors.
BP and BP-OSD-10 are included as references, with OSD
order~10. For RL-S and B-RL-S, $T$ is the maximum number
of sequential iterations, each containing at most $n$ VN
updates. For flooding BP and the BP stage of BP-OSD,
$T$ is the maximum number of flooding iterations.

\subsection{Error-Correction Performance}

\begin{figure}[t]
  \centering
  \includegraphics[width=0.85\linewidth]{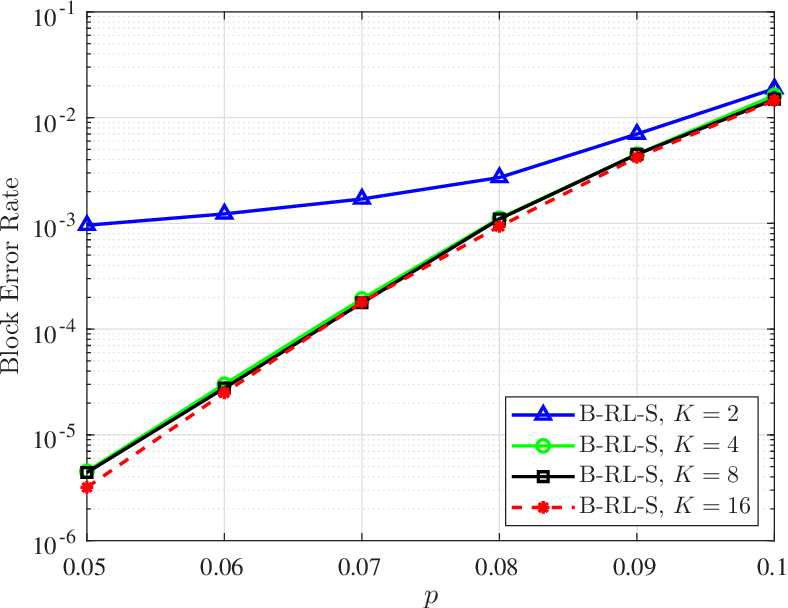}
  \caption{BLER of B-RL-S on the $[[288,12,18]]$ BB code
  over the depolarizing channel for $K\in\{2,4,8,16\}$.\vspace{-4mm}}
  \label{fig:Quantized_RL_S_BB288_FER}
\end{figure}

Fig.~\ref{fig:Quantized_RL_S_BB288_FER} shows the effect
of priority resolution on the $[[288,12,18]]$ BB code. 
With $K=2$, BLER is
substantially larger at small $p$ than with the finer
quantizers. In contrast, the curves for $K=4$, $8$, and
$16$ remain close over the plotted range. 
For this experiment, increasing the
resolution beyond four levels provides little further
improvement in the plotted BLER. 

\begin{figure}[t]
  \centering
  \includegraphics[width=0.85\linewidth]{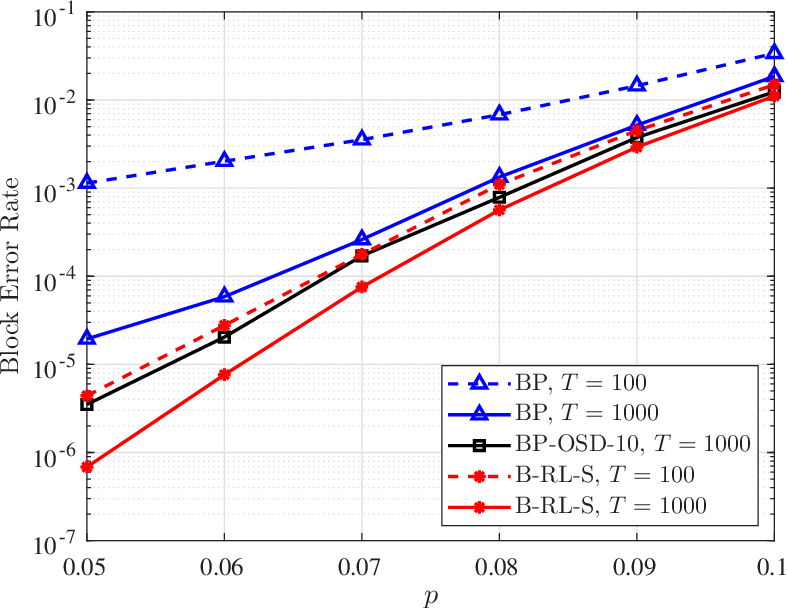}
  \caption{BLER on the $[[288,12,18]]$ BB code over the depolarizing channel:
  B-RL-S with $K=8$ and $T\in\{100,1000\}$, BP with
  $T\in\{100,1000\}$, and BP-OSD-10 with a BP cap of
  $T=1000$.\vspace{-4mm}}
  \label{fig:BB288_QRL_LSvOSD}
\end{figure}

Fig.~\ref{fig:BB288_QRL_LSvOSD} compares B-RL-S with
$K=8$ against BP and BP-OSD-10. Increasing the B-RL-S
iteration cap from $T=100$ to $T=1000$ lowers BLER throughout
the plotted range. At $T=1000$, B-RL-S outperforms both
displayed baselines. In particular, at $p=0.05$, its
BLER is approximately five times lower than that of
BP-OSD-10 with $T=1000$, without an OSD post-processing
stage. 
With $T=100$, B-RL-S also achieves a BLER close
to that of BP-OSD-10 with a BP cap of $1000$
iterations.

\begin{figure}[t]
  \centering
  \includegraphics[width=0.85\linewidth]{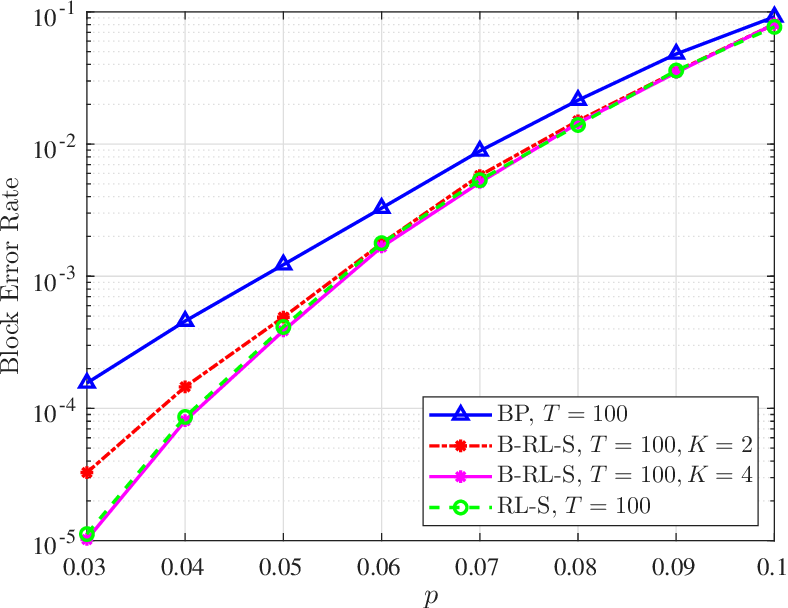}
  \caption{BLER on the $[[144,12,12]]$ BB code over the
  depolarizing channel: B-RL-S with $K\in\{2,4\}$,
 RL-S, and BP, all with $T=100$.\vspace{-4mm}}
  \label{fig:Quantized_RL_S_v_BP_144_FER}
\end{figure}

Fig.~\ref{fig:Quantized_RL_S_v_BP_144_FER} compares
B-RL-S with RL-S on the $[[144,12,12]]$ BB code at $T=100$.
Four priority levels closely track RL-S across the
plotted range, while two levels incur a more visible
loss at the smallest error probabilities. Both
quantized schedules improve on the displayed BP curve.

\begin{figure}[t]
  \centering
  \includegraphics[width=0.85\linewidth]{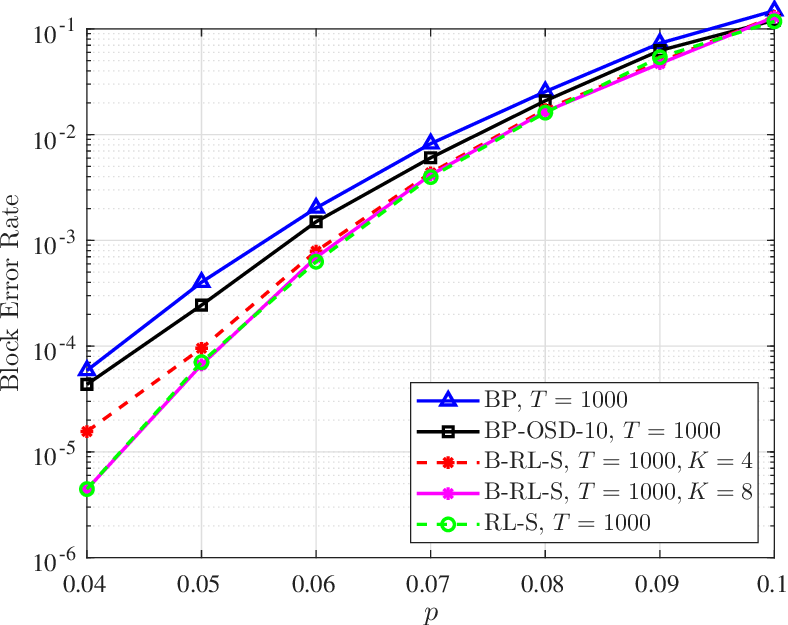}
  \caption{BLER on the $[[180,10,15\leq d\leq18]]$ A5 code over the depolarizing channel:
  B-RL-S with $K\in\{4,8\}$, RL-S, BP, and BP-OSD-10,
  all with $T=1000$.\vspace{-4mm}}
  \label{fig:Quantized_RL_S_v_OSD_A5_FER}
\end{figure}

For the A5 code at $T=1000$,
Fig.~\ref{fig:Quantized_RL_S_v_OSD_A5_FER} shows close
agreement between $K=8$ and RL-S. With $K=4$, the
performance loss is again most evident at small $p$,
although the curve remains below the displayed BP and
BP-OSD-10 curves. Together, these results show that a
small priority alphabet can preserve much of the RL-S
gain, with the required resolution depending on the
code and operating point.

\section{Conclusion}
Our proposed B-RL-S replaces full-precision learned scheduling scores
with integer priorities and dynamically maintained FIFO
buckets, while retaining the states and BP updates of
RL-S. Numerical results show that a small number of
priority levels can preserve much of the BLER improvement achieved by RL-S. On the $[[288,12,18]]$ BB code, the $K=8$ decoder achieves an
approximately fivefold lower BLER than BP-OSD-10 at
depolarizing probability $p=0.05$ with (worst-case) iteration caps of $1000$. Indexed bucket operations
reduce scheduling overhead. These benefits require no retraining or
algebraic post-processing.

\section{Acknowledgment}
This work is in part supported by the National Science Foundation (NSF) under grants CNS2451268, CNS2514415, CCF2145917, and ITE2515378, and the Office of Naval Research (ONR) under Grant N000142112472. 

\bibliographystyle{IEEEtran}
\bibliography{bibliography}

@article{bravyi2024,
  author  = {Bravyi, S. and Cross, A. W. and Gambetta, J. M. and
             Maslov, D. and Rall, P. and Yoder, T. J.},
  title   = {High-threshold and low-overhead fault-tolerant quantum memory},
  journal = {Nature},
  volume  = {627},
  pages   = {778--782},
  year    = {2024},
  doi     = {10.1038/s41586-024-07107-7},
  url     = {https://doi.org/10.1038/s41586-024-07107-7}
}

@article{poulin2008,
  author        = {Poulin, D. and Chung, Y.},
  title         = {On the iterative decoding of sparse quantum codes},
  journal       = {Quantum Inf. Comput.},
  volume        = {8},
  number        = {10},
  pages         = {987--1000},
  year          = {2008},
  eprint        = {0801.1241},
  archivePrefix = {arXiv},
  url           = {https://arxiv.org/abs/0801.1241}
}

@article{kuo2022,
  author  = {Kuo, K.-Y. and Lai, C.-Y.},
  title   = {Exploiting degeneracy in belief propagation decoding of quantum codes},
  journal = {npj Quantum Inf.},
  volume  = {8},
  pages   = {111},
  year    = {2022},
  doi     = {10.1038/s41534-022-00623-2},
  url     = {https://doi.org/10.1038/s41534-022-00623-2}
}

@article{panteleev2021,
  author  = {Panteleev, P. and Kalachev, G.},
  title   = {Degenerate quantum {LDPC} codes with good finite length performance},
  journal = {Quantum},
  volume  = {5},
  pages   = {585},
  year    = {2021},
  doi     = {10.22331/q-2021-11-22-585},
  url     = {https://doi.org/10.22331/q-2021-11-22-585}
}

@article{roffe2020,
  author  = {Roffe, J. and White, D. R. and Burton, S. and Campbell, E.},
  title   = {Decoding across the quantum low-density parity-check code landscape},
  journal = {Phys. Rev. Research},
  volume  = {2},
  pages   = {043423},
  year    = {2020},
  doi     = {10.1103/PhysRevResearch.2.043423},
  url     = {https://doi.org/10.1103/PhysRevResearch.2.043423}
}

@article{hillmann2025,
  author  = {Hillmann, T. and Berent, L. and Quintavalle, A. O. and
             Eisert, J. and Wille, R. and Roffe, J.},
  title   = {Localized statistics decoding for quantum low-density parity-check codes},
  journal = {Nature Commun.},
  volume  = {16},
  pages   = {8214},
  year    = {2025},
  doi     = {10.1038/s41467-025-63214-7},
  url     = {https://doi.org/10.1038/s41467-025-63214-7}
}

@article{kuo2020,
  author        = {Kuo, K.-Y. and Lai, C.-Y.},
  title         = {Refined belief propagation decoding of sparse-graph quantum codes},
  journal       = {IEEE J. Sel. Areas Inf. Theory},
  volume        = {1},
  number        = {2},
  pages         = {487--498},
  year          = {2020},
  eprint        = {2002.06502},
  archivePrefix = {arXiv},
  url           = {https://arxiv.org/abs/2002.06502}
}

@inproceedings{ducrest2023,
  author        = {du Crest, J. and Garcia-Herrero, F. and Mhalla, M. and
                   Savin, V. and Valls, J.},
  title         = {Layered decoding of quantum {LDPC} codes},
  booktitle     = {Proc. 12th Int. Symp. Topics Coding (ISTC)},
  year          = {2023},
  pages         = {1--5},
  eprint        = {2308.13377},
  archivePrefix = {arXiv},
  url           = {https://arxiv.org/abs/2308.13377}
}

@inproceedings{moradi2026seq,
  author    = {Moradi, M. and Habib, S. and Nourozi, V. and Mitchell, D. G. M.},
  title     = {Sequential {BP}-based decoding of {QLDPC} codes},
  booktitle = {Proc. IEEE ICC},
  year      = {2026},
  pages     = {1--6},
  doi       = {10.1109/ICC59461.2026.11587995},
  url       = {https://doi.org/10.1109/ICC59461.2026.11587995}
}

@article{habib2021,
  author  = {Habib, S. and Beemer, A. and Kliewer, J.},
  title   = {Belief propagation decoding of short graph-based channel codes via reinforcement learning},
  journal = {IEEE J. Sel. Areas Inf. Theory},
  volume  = {2},
  number  = {2},
  pages   = {627--640},
  year    = {2021},
  doi     = {10.1109/JSAIT.2021.3073834},
  url     = {https://doi.org/10.1109/JSAIT.2021.3073834}
}

@article{reldec2023,
  author  = {Habib, S. and Beemer, A. and Kliewer, J.},
  title   = {{RELDEC}: Reinforcement learning-based decoding of moderate length {LDPC} codes},
  journal = {IEEE Trans. Commun.},
  volume  = {71},
  number  = {10},
  pages   = {5661--5674},
  year    = {2023},
  doi     = {10.1109/TCOMM.2023.3296621},
  url     = {https://doi.org/10.1109/TCOMM.2023.3296621}
}

@misc{moradi2026rl,
  author        = {Moradi, M. and Nourozi, V. and Habib, S. and Mitchell, D. G. M.},
  title         = {Learning to decode quantum {LDPC} codes via belief propagation},
  year          = {2026},
  eprint        = {2603.10192},
  archivePrefix = {arXiv},
  note          = {arXiv:2603.10192},
  url           = {https://arxiv.org/abs/2603.10192}
}

@ARTICLE{moradi2025enhancing,
  author={Moradi, Mohsen and Habib, Salman and Mitchell, David G. M.},
  journal={IEEE Communications Letters}, 
  title={Enhancing Belief Propagation Decoding of Polar Codes: A Reinforcement Learning Approach}, 
  year={2025},
  volume={29},
  number={6},
  pages={1285-1289},
  doi={10.1109/LCOMM.2025.3559466}
}

@article{calderbank1996good,
  author  = {Calderbank, A. R. and Shor, P. W.},
  title   = {Good quantum error-correcting codes exist},
  journal = {Phys. Rev. A},
  volume  = {54},
  number  = {2},
  pages   = {1098--1105},
  year    = {1996},
  doi     = {10.1103/PhysRevA.54.1098}
}

@article{steane1996multiple,
  author  = {Steane, A. M.},
  title   = {Multiple-particle interference and quantum error correction},
  journal = {Proc. R. Soc. Lond. A},
  volume  = {452},
  number  = {1954},
  pages   = {2551--2577},
  year    = {1996},
  doi     = {10.1098/rspa.1996.0136}
}

@misc{moradi2026cluster,
  author       = {Mohsen Moradi and Taejoon Kim and
                  R{\'e}mi A. Chou and David G. M. Mitchell},
  title        = {Learning to Decode Quantum {LDPC} Codes via
                  Cluster-Based Sequential Belief Propagation},
  howpublished = {arXiv:2607.20130},
  year         = {2026},
  url          = {https://arxiv.org/abs/2607.20130}
}

@misc{moradi2026color,
  author       = {Mohsen Moradi and Taejoon Kim and
                  R{\'e}mi A. Chou},
  title        = {Conflict-Free Color-Clustered Sequential
                  Belief-Propagation Decoding of Quantum {LDPC}
                  Codes via Reinforcement Learning},
  howpublished = {arXiv:2609.20236},
  year         = {2026},
  url          = {https://arxiv.org/abs/2609.20236}
}

@article{garciaherrero2026diversity,
  author = {Garcia-Herrero, Francisco and Valls, Javier
            and Vergara-Picazo, Llanos and Torres, Vicente},
  title = {Diversity Methods for Improving Convergence
           and Accuracy of Quantum Error Correction
           Decoders Through Hardware Emulation},
  journal = {Quantum},
  volume = {10},
  pages = {2071},
  year = {2026},
  doi = {10.22331/q-2026-04-16-2071}
}

\end{document}